\renewcommand{\leq}{\leqslant}
\renewcommand{\geq}{\geqslant}
\DeclareMathOperator{\id}{id}
\DeclareMathOperator{\Sym}{Sym}
\newcommand{\N}{\mathbb{N}}
\newcommand{\Z}{\mathbb{Z}}
\numberwithin{equation}{section}
\numberwithin{figure}{section}
\numberwithin{table}{section}
\newtheorem{thm}{Theorem}[section]
\newtheorem*{thm*}{Theorem}
\newtheorem{lem}[thm]{Lemma}
\newtheorem{cor}[thm]{Corollary}
\newtheorem{pro}[thm]{Proposition}
\theoremstyle{definition} 
\newtheorem{defn}[thm]{Definition}
\newtheorem{question}[thm]{Question}
\newtheorem{exa}[thm]{Example}
\title[Decomposition theorems]{Decomposition theorems for involutive solutions to the
Yang--Baxter equation}
\author{S. Ram\'irez, L. Vendramin}
\address{IMAS--CONICET and Depto. de Matem\'atica, FCEN, Universidad de Buenos Aires, Pabell\'on~1,
Ciudad Universitaria, C1428EGA, Buenos Aires, Argentina}
\email{sramirez@dm.uba.ar}
\address{Department of Mathematics and Data
Science, Vrije Universiteit Brussel, Pleinlaan 2, 1050 Brussel}
\email{Leandro.Vendramin@vub.be}
\email{lvendramin@dm.uba.ar}
\keywords{}
\begin{document}

\begin{abstract}
Motivated by the proof of Rump of a conjecture of Gateva--Ivanova 
on the decomposability of square-free solutions to the Yang--Baxter equation, we present 
several other decomposability theorems based on the cycle structure of a 
certain permutation associated with the solution. 
\end{abstract}

\maketitle

\section{Introduction}

In \cite{MR1183474} Drinfeld considered 
set-theoretic solutions, i.e. pairs $(X,r)$, where $X$ is a non-empty set and 
$r\colon X\times X\to X\times X$ is a bijective map such that 
\[
	(r\times\id)(\id\times r)(r\times\id)=(\id\times r)(r\times\id)(\id\times r).
\]
The first systematic account
about these solutions 
was written by Etingof,
Schedler and Soloviev \cite{MR1722951} and Gateva-Ivanova and Van den Bergh 
\cite{MR1637256}. Both papers studied involutive solutions. The theory of such solutions 
was further developed in
several other papers, see for example 
\cite{MR2095675,MR2927367,MR2885602,MR2132760,MR2278047,MR2652212,MR3177933,MR2584610}.

Originally Gateva--Ivanova mainly studied involutive square-free 
solutions, i.e. solutions $(X,r)$ such that $r(x,x)=(x,x)$ holds for all $x\in X$. 
Square-free solutions produce algebraic structures with 
several interesting properties, see for example~\cite{MR2095675}. Nowadays it is clear 
that not only square-free solutions are of interest. In fact, several 
intriguing connections involving arbitrary involutive solutions were found recently~\cite{MR3291816}. 
These connections include 
radical rings~\cite{MR2278047}, homology~\cite{MR3558231}, 
ordered and diffuse groups~\cite{MR3572046,MR3815290,MR3974961} and Garside theory~\cite{MR2764830,MR3374524}. 

Recent research on the combinatorics of the Yang--Baxter equation is mainly 
based on the use of powerful algebraic methods. This leads to several breakthroughs in the area and 
new families of solutions were found. 
However, the problem of constructing new classes of solutions is still challenging 
and a program to classify all solutions is somewhat still out of reach. 
There are two promising approaches in this context. One is based on \emph{indecomposable} solutions 
and the other one 
is based on \emph{retractable} solutions. In both cases, 
the idea is to understand small solutions 
suitable to play the role of building blocks for the theory. 

The starting point of this work is an old conjecture of 
Gateva--Ivanova about the decomposability of certain solutions. In a 1996 talk at the \emph{International
Algebra Conference} in Hungary, she 
conjectured that finite \emph{square-free} solutions 
are always decomposable, that is solutions where the underlying set $X$ 
that can  be decomposed into a disjoint union $X=Y\cup Z$ of  
subsets $Y$ and $Z$ of $X$ such that $r(Y\times Y)\subseteq Y\times Y$ 
and $r(Z\times Z)\subseteq Z\times Z$. 

The conjecture was proved by Rump in~\cite{MR2132760}. In the same paper 
he also showed that the conjecture cannot be extended to infinite solutions. 

Based on the work of Etingof, Schedler and Soloviev~\cite{MR1722951}, 
we consider the diagonal of a solution $(X,r)$, a certain permutation $T$ of the set $X$. 
The cycle structure of this permutation turns out to be an invariant of the solution. 
Rump's theorem can then be restated as follows: If the diagonal
map of a solution $(X,r)$ fixes all points of $X$, then $(X,r)$ is decomposable.  

In this work we deal with decomposition theorems similar to that of Rump. 
The basic tools are the diagonal map of a solution and a recent result of 
Ced\'o, Jespers and Okni\'nski about solutions with a primitive permutation group~\cite{CJO}. 
It seems that Rump's result is 
just the tip of the iceberg, as --at least for finite solutions-- the bijective 
map $T$ turns out to encode deep combinatorial information about the structure of the solution. 
In particular, the map $T$ could be used to detect (in)decomposability or retractability of solutions.  

\medskip
The paper is organized as follows. We recall some preliminaries in Section~\ref{preliminaries}. 
We recall the classification of involutive solutions with a prime number of elements found by
Etingof, Schedler and Soloviev in~\cite{MR1722951}, Rump's theorem on 
decomposability of square-free solutions~\cite{MR2132760} and
the theorem of Ced\'o, Jespers and Okni\'nski on involutive 
solutions with primitive permutation groups~\cite{CJO}.  
Our results appear in Section~\ref{theorems}. 
In Theorem~\ref{thm:no_id} we prove that indecomposable solutions are always defined by 
permutations different from the identity. We remark that this result might 
be useful for the problem of constructing explicitly indecomposable solutions of 
small size. 
In Theorem~\ref{thm:n} we prove that
involutive solutions where the diagonal is a cycle of maximal length are indecomposable. 
Theorem~\ref{thm:fix1} deals with the decomposability of 
those solutions where the diagonal fixes only one point. 
Theorem~\ref{thm:n-2_n-3} is devoted to the decomposability of those solutions
where the diagonal fixes two or three points. We conclude the paper with some 
questions and conjectures. 

\section{Preliminaries}
\label{preliminaries}

A \emph{set-theoretic solution to the Yang--Baxter equation} is a pair 
$(X,r)$, where $X$ is a non-empty set and 
$r\colon X\times X\to X\times X$ is a bijective map such that 
\[
	(r\times\id)(\id\times r)(r\times\id)=(\id\times r)(r\times\id)(\id\times r).
\]
By convention, 
we write 
\[
r(x,y)=(\sigma_x(y),\tau_y(x)).
\]

A solution $(X,r)$ is said to be \emph{non-degenerate} if 
all the maps $\sigma_x\colon X\to X$ and $\tau_x\colon X\to X$ are bijective 
for all $x\in X$, and it is said to be \emph{involutive} if $r^2=\id_{X\times X}$. Note that 
for non-degenerate involutive solutions, 
\begin{equation*}
    \tau_y(x)=\sigma_{\sigma_x(y)}^{-1}(x)
    \quad\text{and}\quad
    \sigma_x(y)=\tau_{\tau_y(x)}^{-1}(y).
\end{equation*}

By convention, a \emph{solution} will be a non-degenerate involutive solution to the Yang--Baxter equation. 

A \emph{homomorphism} between the solutions $(X,r_X)$ and
$(Y,r_Y)$ is a map $f\colon X\to Y$ such that the diagram
 \[
 \xymatrix{
 X\times X
 \ar[d]_{f\times f}
 \ar[r]^-{r_X}
 & X\times X
 \ar[d]^{f\times f}
 \\
 Y\times Y
 \ar[r]^-{r_Y}
 & Y\times Y
 }
 \]
is commutative, i.e. $(f\times f)\circ r_X=r_Y\circ (f\times f)$. 
An isomorphism of solutions is a bijective homomorphism of solutions. 
Solutions form a category. 

A solution $(X,r)$ is said to be \emph{square-free}
if $r(x,x)=(x,x)$ for all $x\in X$. 

In~\cite[Proposition 2.2]{MR1722951} it is proved that if $(X,r)$ is a solution, then 
the map $T\colon X\to X$, $T(x)=\tau_x^{-1}(x)$, is invertible with inverse $x\mapsto\sigma_x^{-1}(x)$ 
and, moreover, 
\[
\tau_x^{-1}\circ T=T\circ\sigma_x
\]
for all $x\in X$. 

\begin{defn}
The \emph{diagonal} of a solution $(X,r)$ is defined as 
the permutation $T\colon X\mapsto X$, $x\mapsto\tau_x^{-1}(x)$.
\end{defn}

The \emph{permutation group} of a solution $(X,r)$ is defined as 
the subgroup $\mathcal{G}(X,r)$ of $\Sym_X$ generated by
the set $\{\sigma_x:x\in X\}$. Note that if $X$ is finite, then $\mathcal{G}(X,r)$ is finite. 
The group $\mathcal{G}(X,r)$ naturally acts on $X$. 
A solution $(X,r)$ is said to be \emph{indecomposable} 
if the group $\mathcal{G}(X,r)$ acts transitively 
on $X$ and \emph{decomposable} otherwise. This definition of decomposability 
turns out to be equivalent to the definition mentioned in the introduction, 
see for example the proof of~\cite[Proposition 2.12]{MR1722951}. 
Using the invertible map $T$ one proves that $\mathcal{G}(X,r)$ 
is isomorphic (as a permutation group on the set $X$) to the 
group generated by the set $\{\tau_x:x\in X\}$. 

\begin{thm}[Etingof--Schedler--Soloviev]
\label{thm:ESS}
Let $p$ be a prime number. 
An indecomposable solution with $p$ elements is
a \emph{cyclic solution}, i.e. a solution 
isomorphic to $(\Z/p,r)$, where
$r(x,y)=(y-1,x+1)$. 
\end{thm}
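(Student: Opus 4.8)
The plan is to analyse the \emph{retract relation} $x\sim y\iff\sigma_x=\sigma_y$ on $X$, turn it into a system of blocks for the permutation group, and then treat the two surviving cases separately. Write $\mathcal{G}=\mathcal{G}(X,r)$. Indecomposability means precisely that $\mathcal{G}$ acts transitively on the $p$-element set $X$, so $p\mid|\mathcal{G}|$ and $\mathcal{G}$ is a transitive --- hence primitive --- subgroup of $\Sym_X$.

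The first step is to show that the $\sim$-classes form a block system for $\mathcal{G}$, of common size a divisor of $p$. From the identity $\tau_x^{-1}=T\circ\sigma_x\circ T^{-1}$ recalled above, $\sigma_y=\sigma_z$ forces $\tau_y=\tau_z$; substituting this into the relation $\sigma_x\sigma_y=\sigma_{\sigma_x(y)}\sigma_{\tau_y(x)}$, which follows from the Yang--Baxter equation, one obtains $\sigma_{\sigma_x(y)}=\sigma_{\sigma_x(z)}$, that is $\sigma_x(y)\sim\sigma_x(z)$, for every $x$. Hence each generator $\sigma_x$ of $\mathcal{G}$ permutes the $\sim$-classes, so these form a block system for the transitive group $\mathcal{G}$; as $|X|=p$ is prime, they all have the same size $d\in\{1,p\}$.

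If $d=p$, then $\sigma_x=\sigma$ is independent of $x$; transitivity of $\mathcal{G}=\langle\sigma\rangle$ forces $\sigma$ to be a $p$-cycle, and writing $r(x,y)=(\sigma(y),\tau_y(x))$ and imposing $r^2=\id$ yields $\sigma(\tau_y(x))=x$, i.e. $\tau_y=\sigma^{-1}$ for all $y$. Thus $r(x,y)=(\sigma(y),\sigma^{-1}(x))$ --- which is readily verified to be a solution --- and the bijection $\Z/p\to X$, $i\mapsto\sigma^{-i}(x_0)$ for a fixed $x_0\in X$, carries $\sigma$ to $i\mapsto i-1$ and so is an isomorphism of solutions onto $(\Z/p,r)$ with $r(x,y)=(y-1,x+1)$. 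This settles the theorem in this case.

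It remains to rule out $d=1$, i.e. the case in which $x\mapsto\sigma_x$ is injective and $(X,r)$ is irretractable; this is the crux. Here $\mathcal{G}$ is a primitive permutation group of prime degree $p$, so by Burnside's theorem it is either conjugate into $\mathrm{AGL}_1(p)$ (the solvable, affine case, with cyclic point stabilizer of order dividing $p-1$) or it is $2$-transitive with nonabelian simple socle; both possibilities must be excluded. In the affine case one writes each $\sigma_x$ as an affine map $y\mapsto a_xy+b_x$ of $\Z/p$ and derives a contradiction from the relations $\sigma_x\sigma_y=\sigma_{\sigma_x(y)}\sigma_{\tau_y(x)}$ and $\tau_y(x)=\sigma_{\sigma_x(y)}^{-1}(x)$ together with the injectivity of $x\mapsto\sigma_x$; the remaining case is incompatible with $\mathcal{G}$ being the permutation group of a finite involutive solution. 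I expect this whole step --- ruling out $d=1$, and especially the exclusion of a $2$-transitive permutation group --- to be the main obstacle; an alternative that bypasses the case analysis is to invoke the classification of involutive solutions with a primitive permutation group, which forces an indecomposable solution of prime order to be cyclic directly.
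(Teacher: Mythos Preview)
The paper does not prove this theorem at all: its entire ``proof'' is a citation of \cite[Theorem~2.13]{MR1722951}. So there is no paper argument to match, and the question is only whether your outline stands on its own.

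Your reduction via the retract relation is correct: the $\sim$-classes do form a block system, and the case $d=p$ is handled cleanly. The gap is the case $d=1$, and you acknowledge as much. Two concrete problems:
\begin{itemize}
\item Your proposed shortcut --- invoking the classification of primitive involutive solutions (Theorem~\ref{thm:CJO} in the paper) --- is \emph{circular}: the paper states explicitly that the proof of Theorem~\ref{thm:CJO} uses Theorem~\ref{thm:ESS}. You cannot appeal to it here.
\item In the Burnside dichotomy, you give no argument for either branch. For the $2$-transitive (nonabelian simple socle) branch, ``incompatible with being the permutation group of a finite involutive solution'' is exactly the content that needs proof. The relevant input from \cite{MR1722951} is that $\mathcal{G}(X,r)$ is always \emph{solvable}; this is a nontrivial structural theorem about the structure group $G(X,r)$ and is what kills the $2$-transitive case. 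You have not supplied or cited it. For the affine branch, ``derives a contradiction'' is not a proof either: writing $\sigma_x(y)=a_xy+b_x$ and plugging into $\sigma_x\sigma_y=\sigma_{\sigma_x(y)}\sigma_{\tau_y(x)}$ does not by itself contradict injectivity of $x\mapsto\sigma_x$; one still has to force all $a_x$ to coincide and then all $b_x$ to coincide, and that step is missing.
\end{itemize}

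In short: the outline up to and including $d=p$ is fine, but the heart of the theorem is precisely the $d=1$ case, and there you have a plan rather than a proof. If you want a self-contained argument, the natural route is to import (or reprove) the solvability of $\mathcal{G}(X,r)$ from \cite{MR1722951}, land inside $\mathrm{AGL}_1(p)$, and then finish the affine computation explicitly.
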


\begin{proof}
  See~\cite[Theorem 2.13]{MR1722951}
\end{proof}

We recall Rump's theorem. 

\begin{thm}[Rump]
\label{thm:Rump}
	Let $(X,r)$ be a finite solution. If the diagonal 
	of the solution is the identity, then $(X,r)$ is decomposable.
\end{thm}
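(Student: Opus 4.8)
The first step is to unwind the hypothesis. Since $T(x)=\tau_x^{-1}(x)$ has inverse $x\mapsto\sigma_x^{-1}(x)$, the condition $T=\id$ is equivalent to $\sigma_x(x)=x=\tau_x(x)$ for all $x\in X$, i.e. to $r(x,x)=(x,x)$: the solution is \emph{square-free}. Feeding $T=\id$ into the relation $\tau_x^{-1}\circ T=T\circ\sigma_x$ gives in addition $\tau_x=\sigma_x^{-1}$, so $(X,r)$ is entirely encoded by the family $(\sigma_x)_{x\in X}$ (a finite square-free cycle set, in the language of~\cite{MR2132760}). Thus the statement to prove is exactly that \emph{every finite square-free solution is decomposable}, and throughout I will use that $(X,r)$ is decomposable if and only if there is a proper non-empty subset $Y\subsetneq X$ with $\sigma_x(Y)=Y$ for all $x\in X$.

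The plan is an induction on $|X|$ carried by the retraction: take a counterexample $(X,r)$ with $|X|$ minimal (necessarily $|X|\geq2$) and derive a contradiction. Recall from~\cite{MR1722951} the retract $\Ret(X,r)$, the solution on $X/{\sim}$ where $x\sim y$ iff $\sigma_x=\sigma_y$; applying the (surjective) quotient homomorphism $\pi$ to $r(x,x)=(x,x)$ shows that $\Ret(X,r)$ is again square-free. If $X/{\sim}$ is a single class, then all the $\sigma_x$ coincide and, being square-free, each fixes every point, so every $\sigma_x=\id$, the group $\mathcal G(X,r)$ is trivial, and $(X,r)$ is decomposable because $|X|\geq2$---a contradiction; hence $\sim$ has at least two classes. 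On the other hand, since $\pi$ is a solution homomorphism the $\sim$-classes form a $\mathcal G(X,r)$-stable partition on which $\mathcal G(X,r)$ acts through $\mathcal G(\Ret(X,r))$, so if $\Ret(X,r)$ were decomposable then the preimage under $\pi$ of a proper non-empty invariant subset of $X/{\sim}$ would show that $(X,r)$ is decomposable---impossible by minimality, unless $\Ret(X,r)$ has the same cardinality as $X$, i.e. $\sim$ is trivial. In short, a minimal counterexample must be \emph{irretractable} with $|X|\geq2$.

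Everything therefore reduces to the irretractable case, and this is where I expect the real difficulty to lie: irretractable finite square-free solutions of size $\geq2$ genuinely exist (failure of the multipermutation conjecture for square-free solutions), so they cannot be argued away, and one must actually exhibit, in such a solution, a proper non-empty $Y\subsetneq X$ with $\sigma_x(Y)=Y$. The soft permutation-group and retraction arguments above are useless here, and one needs the finer structure carried by the cycle set $(\sigma_x)$: its structure group, or associated brace $G(X,r)$, into which $X$ embeds as a single orbit, together with the constraints that square-freeness ($T=\id$) imposes on that brace. Concretely I would try either to produce the invariant set $Y$ directly---for instance as a level set of an invariant map from $X$ into an abelian group built from the brace, or as the orbit of a distinguished element under a subfamily of the $\sigma_x$ singled out by the cycle-set identities together with $\sigma_x(x)=x$---or, in the spirit of the tools announced in this paper, to split according to whether $\mathcal G(X,r)$ acts primitively on $X$: primitivity would force, by the Ced\'o--Jespers--Okni\'nski theorem~\cite{CJO}, a cyclic solution of prime order, which is never square-free once $|X|\geq2$, a contradiction; imprimitivity should yield, from a non-trivial block system, a strictly smaller square-free solution and so close the induction. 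Making either route precise---controlling the primitive/imprimitive dichotomy, or describing the brace of a square-free solution tightly enough to pin down an invariant subset---is the crux, and I expect essentially all of the work to be concentrated there.
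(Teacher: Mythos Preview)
Your retraction reduction is correct, and so is the observation that a primitive minimal counterexample is excluded by Theorem~\ref{thm:CJO} (a cyclic solution of prime size is never square-free). The genuine gap is the imprimitive branch: a non-trivial block system for the permutation group $\mathcal G(X,r)$ does \emph{not} in general produce a smaller square-free solution. The blocks are permuted by $\mathcal G(X,r)$, but there is no reason the individual maps $\sigma_x$ should respect the block partition or induce a cycle-set structure on the set of blocks; the only block system that automatically carries a quotient solution is the retraction partition, and you have already exhausted that one by passing to the irretractable case. So the induction does not close, and you are left exactly where you started---needing to decompose an irretractable square-free solution by hand---which, as you yourself say, is where ``essentially all of the work'' lies. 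In other words your sketch isolates the hard case but does not address it.

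For comparison: the paper does not give its own argument here, it simply cites Rump's original proof~\cite[Theorem~1]{MR2132760}. That proof is purely combinatorial in the language of cycle sets and predates (hence does not invoke) Theorem~\ref{thm:CJO}; nor does it pass through retraction or a primitive/imprimitive dichotomy. Rump works directly with the cycle-set identity $(x\cdot y)\cdot(x\cdot z)=(y\cdot x)\cdot(y\cdot z)$ together with the square-free condition $x\cdot x=x$ to locate a proper $\mathcal G(X)$-invariant subset of $X$. Supplying an argument of that kind is precisely the missing ingredient in your proposal.
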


\begin{proof}
    See~\cite[Theorem 1]{MR2132760}.
\end{proof}

A solution $(X,r)$ is said to be \emph{primitive} if the group $\mathcal{G}(X,r)$ acts primitively on $X$. 
In the Oberwolfach mini-workshop \emph{Algebraic Tools for 
Solving the Yang–Baxter Equation}, Ballester--Bolinches posed the following 
conjecture~\cite{MR4176795}: Finite primitive solutions are of size $p$ for some prime number $p$. 
The conjecture was proved by Ced\'o, Jespers and Okni\'nski. The proof uses 
the theory of braces and Theorem~\ref{thm:ESS}. 

\begin{thm}[Ced\'o--Jespers--Okni\'nski]
\label{thm:CJO}
	If $(X,r)$ is a finite primitive solution, then $|X|$ is a prime number and
	$(X,r)$ is a cyclic solution. 
\end{thm}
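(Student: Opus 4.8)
The plan is to reduce the theorem of Ced\'o--Jespers--Okni\'nski to the classification of primitive solutions as finite simple groups via the theory of braces, and then to exploit the brace structure to force the size to be prime. Recall that to any finite solution $(X,r)$ one associates its structure brace $G=\mathcal{G}(X,r)$, whose additive group $\Add{G}$ is the (permutation) image of the free abelian monoid on $X$ and whose multiplicative group $\Mul{G}$ is the permutation group generated by the $\sigma_x$. The set $X$ embeds $G$-equivariantly into $G$ as an orbit of the natural action, and primitivity of $\mathcal{G}(X,r)$ on $X$ means that the point stabilizer is a maximal subgroup of $\Mul{G}$. So the first step is to translate the problem into one about the brace $G$: we have a finite left brace $G$ acting on an orbit $X$ on which the multiplicative group acts primitively, and we want $|X|$ prime.

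First I would use the structure of the additive group. The key observation is that the additive group $\Add{G}$ acts on $X$ as well (through the canonical map to $\mathcal{G}(X,r)$, or more precisely one uses that $G$ acts on $X$ via the map $a \mapsto \lambda_a$), and a brace-theoretic argument shows that if $\Mul{G}$ acts primitively then already a suitable subquotient forces the additive group to be elementary abelian; in particular $|X| = p^n$ is a prime power. The mechanism here is that the socle of the brace, $\Soc(G)$, is a normal subgroup, and one analyzes how it intersects the point stabilizer. Primitivity severely constrains this: a nontrivial normal subgroup of a primitive group is transitive, so either $\Soc(G)$ acts trivially on $X$ (which, combined with non-degeneracy, is impossible unless $X$ is a single point) or it acts transitively, and running this argument on the $p$-components of the additive group pins down $|X|$ to be a power of a single prime $p$.

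The heart of the matter — and the step I expect to be the main obstacle — is to rule out $n \ge 2$, i.e.\ to show $|X| = p^n$ with $n \ge 2$ cannot occur. Here one argues that when the additive group is elementary abelian of rank $n \ge 2$, the multiplicative group $\Mul{G}$ embeds into the holomorph $\Hol(\Add{G}) = \Add{G} \rtimes \GL_n(\F_p)$ (this is where the brace axioms feed in: the map $a \mapsto \lambda_a$ is a homomorphism $\Mul{G} \to \Aut\Add{G}$). A primitive action of such a group on $p^n$ points is an affine primitive permutation group, and the point stabilizer projects onto an irreducible subgroup of $\GL_n(\F_p)$. One then has to show this is incompatible with $X$ being a single orbit carrying a solution structure — essentially, one reconstructs from the regularity properties of the $\sigma_x$ and the relation $\tau_x^{-1}T = T\sigma_x$ that the associated linear data cannot be genuinely $n$-dimensional and irreducible at once. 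It is plausible the authors instead invoke the classification of finite simple groups via the O'Nan--Scott theorem to dispatch the non-affine primitive types directly and handle the affine case by the brace argument above; in any case, reconciling the representation-theoretic constraints with the combinatorial constraints coming from $r$ is where the real work lies. Once $n = 1$ is forced, $|X| = p$, and Theorem~\ref{thm:ESS} identifies $(X,r)$ as a cyclic solution, completing the proof.
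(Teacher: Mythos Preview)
The paper does not prove this statement: its entire proof is the reference to \cite[Theorem~3.1]{CJO}. There is therefore no argument in the present paper to compare your proposal against; what you have written is an attempted reconstruction of the Ced\'o--Jespers--Okni\'nski argument itself.

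As such a reconstruction, your outline has the right flavor --- the brace structure on $\mathcal{G}(X,r)$, prime-power degree forced by primitivity, and the final appeal to Theorem~\ref{thm:ESS} --- but it contains genuine gaps and at least one error. The assertion that $X$ embeds $G$-equivariantly into $G=\mathcal{G}(X,r)$ is false in general: the natural map $x\mapsto\sigma_x$ need not be injective, and for the cyclic solutions you are ultimately trying to characterize it is constant. Your argument for $|X|=p^{n}$ via the socle is only a gesture; a key ingredient in \cite{CJO} is rather that Sylow subgroups of the additive group of a finite left brace are left ideals, and it is the interaction of these with the primitive action that does the work, not the picture you sketch. Most importantly, you openly concede that the step ruling out $n\geq 2$ is missing and float O'Nan--Scott or the classification of finite simple groups as a fallback; the actual argument is far more elementary and does not touch CFSG. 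In short, the broad strokes are plausible, but the proposal is a speculative outline rather than a proof, and the specific brace-theoretic input that drives the Ced\'o--Jespers--Okni\'nski argument is absent.
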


\begin{proof}
    See~\cite[Theorem 3.1]{CJO}. 
\end{proof}

In order to prove the conjecture mentioned in the introduction, 
Rump introduced a better way to deal with solutions. 
A \emph{cycle set} is a non-empty set $X$ provided with a 
binary operation $X\times X\to X$, $(x,y)\mapsto x\cdot y$, 
such that the maps $\varphi_x\colon X\to X$, $y\mapsto x\cdot y$, 
are bijective for all $x\in X$ and 
\[
(x\cdot y)\cdot (x\cdot z)=(y\cdot x)\cdot (y\cdot z)
\]
holds for all $x,y,z\in X$. 

A \emph{homomorphism} between the cycle sets $X$ and $Z$ is a map $f\colon
X\to Z$ such that $f(x\cdot y)=f(x)\cdot f(y)$ for all $x,y\in X$. An 
\emph{isomorphism} of cycle sets is a bijective homomorphism of cycle sets. 
Cycle sets form a category. 

A cycle set $X$ is said to be \emph{non-degenerate} if the map $X\to X$, $x\mapsto x\cdot x$, is bijective. 
Rump proved that solutions are in bijective correspondence with non-degenerate cycle sets. The correspondence
is given by
\[
	r(x,y)=((y*x)\cdot y,y*x),
\]
where $y*x=z$ if and only if $x=y\cdot z$, see \cite[Proposition 1]{MR2132760}. Moreover, 
the category of solutions and the category of non-degenerate cycle sets are equivalent. 

The permutation group $\mathcal{G}(X)$ of a cycle set $X$ is defined as the permutation group of its associated solution. 
Indecomposable and decomposble cycle sets are then defined in the usual way. 

The diagonal of a cycle set $X$ is then defined as the diagonal of its associated solution. 
Clearly, the cycle structure of the diagonal of a finite cycle set is invariant under isomorphisms. Note that
a solution is square-free if and only if the diagonal of its associated cycle set is the identity. 

We write $\Sym_X$ to denote the set of bijective maps $X\to X$. 

\begin{lem}
\label{lem:diagonal}
    Let $X$ be a cycle set with diagonal $T$. 
    If $U\colon X\to X$ is a bijective map and $U$ and $T$ are conjugate in $\Sym_X$, 
    then there exists an isomorphic cycle set structure
    on the set $X$ with diagonal $U$. 
\end{lem}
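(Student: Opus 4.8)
The plan is to reduce the statement to a transport-of-structure argument, once the diagonal of a cycle set has been described intrinsically. So the first step I would carry out is the observation that the diagonal of a cycle set $X$ is nothing but the squaring map $x\mapsto x\cdot x$. Indeed, comparing the convention $r(x,y)=(\sigma_x(y),\tau_y(x))$ with Rump's formula $r(x,y)=((y*x)\cdot y,\,y*x)$ gives $\tau_y(x)=y*x$; and since $y*x=z$ means $x=y\cdot z$, the map $\tau_x$ is the inverse of $\varphi_x$, whence $T(x)=\tau_x^{-1}(x)=\varphi_x(x)=x\cdot x$. (This matches the fact recalled above that a solution is square-free precisely when the diagonal of its cycle set is the identity.) Since the statement refers to \emph{the} diagonal of $X$, the cycle set $X$ is non-degenerate and $x\mapsto x\cdot x$ is indeed a permutation.

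Next, using the hypothesis I would fix $f\in\Sym_X$ with $U=f\circ T\circ f^{-1}$ and transport the cycle set structure of $X$ along $f$ onto the same underlying set $X$, defining a new binary operation by $x\bullet y=f\bigl(f^{-1}(x)\cdot f^{-1}(y)\bigr)$. It is routine to check that $(X,\bullet)$ is again a cycle set: its left translations are the bijections $f\circ\varphi_{f^{-1}(x)}\circ f^{-1}$, and the cycle set identity is inherited from that of $(X,\cdot)$ because $f$ is a bijection. By construction $f\colon(X,\cdot)\to(X,\bullet)$ satisfies $f(a\cdot b)=f(a)\bullet f(b)$, so it is an isomorphism of cycle sets; in particular, as the categories of solutions and of non-degenerate cycle sets are equivalent, $f$ is an isomorphism of the associated solutions, and non-degeneracy transports as well, the squaring map of $(X,\bullet)$ being $f$ conjugated with that of $(X,\cdot)$. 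Thus $(X,\bullet)$ is a genuine cycle set structure on $X$ isomorphic to the original one.

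It then remains to compute the diagonal of $(X,\bullet)$, and by the first step this is simply $x\mapsto x\bullet x=f\bigl(f^{-1}(x)\cdot f^{-1}(x)\bigr)=f\bigl(T(f^{-1}(x))\bigr)=(f\circ T\circ f^{-1})(x)=U(x)$. Hence $(X,\bullet)$ is the required cycle set structure on the set $X$, isomorphic to $X$ and having diagonal exactly $U$, which finishes the proof.

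As for the main obstacle: there is no conceptual difficulty here, and the only care needed is bookkeeping — getting the direction of Rump's correspondence right so that the diagonal genuinely is the squaring map (and not its inverse), and checking that all the relevant data, non-degeneracy included, really are preserved under transport along $f$. I expect that to be the whole of the work.
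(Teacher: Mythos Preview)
Your proposal is correct and is essentially the paper's own argument: both transport the cycle set structure along a conjugating bijection and then compute the new squaring map. The only cosmetic difference is that the paper writes $U=\gamma^{-1}T\gamma$ and sets $y\cdot_U z=\gamma^{-1}(\gamma(y)\cdot\gamma(z))$, which is your formula with $f=\gamma^{-1}$; your explicit verification that $T(x)=x\cdot x$ is a helpful addition that the paper leaves implicit.
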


\begin{proof}
    Let $(x,y)\mapsto x\cdot y$ be the cycle set operation on $X$ and  
	let $\gamma\in\Sym_X$ be such that $U=\gamma^{-1}\circ T\circ \gamma$. A direct
	calculation shows that the operation $y\cdot_U z=\gamma^{-1}(\gamma(y)\cdot
	\gamma(z))$ turns $X$ into a cycle set isomorphic to the original $X$. Moreover, 
	\[
		y\cdot_U y=\gamma^{-1}(\gamma(y)\cdot\gamma(y))=\gamma^{-1}(T(\gamma(y)))=(\gamma^{-1}\circ T\circ\gamma)(y)=U(y)
	\]
	holds for all $y\in X$. 
\end{proof}


\section{Decomposition theorems}
\label{theorems}

For the first results in this section we use the language of cycle sets. We start
with some easy consequences of Theorem~\ref{thm:CJO} and the results of~\cite{MR3666217}. 

\begin{thm}\label{thm:decomposable}
Let $X$ be a finite cycle set with more than one element. If some $\varphi_x$ is a cycle of length at least $|X|/2$ and is coprime with $|X|$ then $X$ is decomposable.

\end{thm}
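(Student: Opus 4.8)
The plan is to work in the permutation group $\mathcal{G}(X)=\langle\varphi_y:y\in X\rangle$ (recall that for a cycle set $X$ this group realizes $\mathcal{G}(X)$ since $\tau_y=\varphi_y^{-1}$, and that $X$ is indecomposable precisely when $\mathcal{G}(X)$ acts transitively on $X$), and to argue by contradiction, inducting on $n=|X|$. So suppose $X$ is indecomposable, and let $x\in X$ be a point for which $\varphi_x$ is an $\ell$-cycle with $\ell\geq n/2$ and $\gcd(\ell,n)=1$. Since $\gcd(n,n)=n>1$, necessarily $\ell<n$, so $\varphi_x$ fixes a non-empty set $F$ with $|F|=n-\ell$; moreover $\gcd(|F|,n)=1$ and $|F|<\ell$. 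Write $S=X\setminus F=\supp(\varphi_x)$, which is a single orbit of $\langle\varphi_x\rangle$.

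First I would dispose of the primitive case. If $\mathcal{G}(X)$ is primitive --- in particular, this covers every prime $n$, hence all the base cases --- then Theorem~\ref{thm:CJO} forces $n$ to be prime and $X$ to be a cyclic solution; but in a cyclic solution every $\varphi_y$ is a full $n$-cycle, so $\ell=n$, contradicting $\gcd(\ell,n)=1$. Thus $\mathcal{G}(X)$ is imprimitive and $n$ is composite. Now fix a non-trivial $\mathcal{G}(X)$-invariant partition $\Sigma$ of $X$ into $m$ blocks of common size $d=n/m$, with $1<m<n$, and examine the permutation $\bar\varphi_x$ induced by $\varphi_x$ on $\Sigma$. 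Since $S$ is one $\langle\varphi_x\rangle$-orbit, the (unique) $\langle\varphi_x\rangle$-orbit of blocks meeting $S$ absorbs all of $S$, and being the orbit of a cyclic group it consists of $k$ blocks permuted by $\varphi_x$ in a single $k$-cycle, the other $m-k$ blocks being fixed. The covering estimate $kd\geq|S|=\ell\geq n/2=md/2$ gives $k\geq m/2$, while $k$ divides $|\langle\varphi_x\rangle|=\ell$, so $\gcd(k,m)\mid\gcd(\ell,n)=1$; in particular $k\neq m$. Hence $\bar\varphi_x$ is a cycle of length $k$ with $m/2\le k<m$ and $\gcd(k,m)=1$: the induced action on $\Sigma$ has exactly the same shape as the original hypothesis.

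Consequently, as soon as the quotient $X/\Sigma$ carries a cycle set structure in which $\bar\varphi_x$ is one of the maps $\varphi_{\bar x}$, that quotient is an indecomposable cycle set of size $m<n$ to which the inductive hypothesis applies, so it is \emph{decomposable} --- a contradiction that would finish the proof. This quotient property certainly holds for the retraction congruence $u\sim v\iff\varphi_u=\varphi_v$, which one checks directly to be $\mathcal{G}(X)$-invariant from the cycle-set identity (if $\varphi_u=\varphi_v$ then $\varphi_{\varphi_g(u)}=\varphi_{\varphi_g(v)}$ for every generator $\varphi_g$), so the argument closes whenever $X$ is retractable, modulo the small sizes $|\Ret(X)|\in\{1,2\}$ which one handles by hand (e.g. $|\Ret(X)|=1$ means $\mathcal{G}(X)=\langle\varphi_x\rangle$, which is not transitive since $\ell<n$). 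I expect the main obstacle to be the remaining \emph{irretractable} case: there one needs that a finite indecomposable cycle set with imprimitive permutation group still admits a proper quotient cycle set on which $\varphi_x$ induces the block permutation analysed above (equivalently, that enough $\mathcal{G}(X)$-invariant partitions are induced by quotient cycle sets). This is exactly where the results of~\cite{MR3666217} on quotients of indecomposable solutions are meant to enter; granting that input, the induction on $|X|$ is complete.
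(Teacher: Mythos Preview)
Your approach is substantially more complicated than necessary and contains a real gap. The induction-via-quotients strategy requires that some non-trivial block system $\Sigma$ of $\mathcal{G}(X)$ carry a cycle-set structure on $X/\Sigma$ for which the induced permutation $\bar\varphi_x$ coincides with $\varphi_{\bar x}$. You verify this only for the retraction congruence, and then defer the irretractable case to unspecified ``results of~\cite{MR3666217}''. That is not a proof: arbitrary block systems of the permutation group of a cycle set need not arise from cycle-set congruences, and you have not pointed to a result that produces such a quotient in the irretractable situation. Without it the induction does not close.

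The paper's argument avoids all of this machinery. After ruling out primitivity via Theorem~\ref{thm:CJO} exactly as you do, it stays inside the original solution and exploits a single elementary observation that your block analysis almost reaches but does not use: any block $B$ containing a fixed point $y$ of $\varphi_x$ satisfies $\varphi_x(B)=B$; if $B$ met $S=\supp(\varphi_x)$ it would then have to contain the entire cycle $S$, hence more than $|X|/2$ elements, which is impossible for a proper block. Thus every block meeting $F$ lies entirely in $F$, so the common block size $d$ divides $|F|=n-\ell$ as well as $n$. Since $\gcd(n-\ell,n)=\gcd(\ell,n)=1$, this forces $d=1$, contradicting imprimitivity. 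No induction, no quotient cycle sets, no retractable/irretractable case split.
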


\begin{proof}
We first note that Theorem~\ref{thm:CJO} implies that in none of these cases $\mathcal{G}(X)$ can act primitively on $X$.
Let $y$ be a fixed point of $\varphi_x$ and $B$ a non-trivial block containing $y$. Since $\varphi_x$ fixes $y$ we must have 
    $\varphi_x(B)=B$. If this block were to contain an element that is not fixed by $\varphi_x$, and since this permutation is a cycle,
    it would have to contain all of them, which cannot happen since the block would then have more than $|X|/2$ elements. Since $y$ 
    was arbitrary this means any block containing a point fixed by $\varphi_x$ must consist entirely of fixed points. In particular 
    the size of the block must divide the number of fixed points. Since it also has to divide $|X|$ and these numbers are coprime, 
    there cannot be non-trivial blocks, contradicting the imprimitivity of the solution. 
\end{proof}

\begin{thm}
\label{thm:no_id}
    If $X$ is a finite cycle set 
    with more than one element 
    and there exists some $x\in X$ such that $\varphi_x=\id$, then 
    $X$ is decomposable. 
\end{thm}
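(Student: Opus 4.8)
The plan is to avoid Theorem~\ref{thm:CJO} entirely and argue directly from the cycle set axiom together with the description $\mathcal{G}(X)=\langle\varphi_y:y\in X\rangle$. The goal is to exhibit a non-empty proper $\mathcal{G}(X)$-invariant subset of $X$ (so that $\mathcal{G}(X)$ is not transitive), or else to see that $\mathcal{G}(X)$ is trivial.

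First I would fix $w\in X$ with $\varphi_w=\id$ and use the cycle set identity in the form
\[
\varphi_{\varphi_x(y)}\circ\varphi_x=\varphi_{\varphi_y(x)}\circ\varphi_y,
\]
which is just $(x\cdot y)\cdot(x\cdot z)=(y\cdot x)\cdot(y\cdot z)$ read as an equality of maps in the variable $z$. Substituting $x=w$ and using $\varphi_w=\id$, the left-hand side collapses to $\varphi_y$, so $\varphi_y=\varphi_{\varphi_y(w)}\circ\varphi_y$; cancelling the bijection $\varphi_y$ on the right gives $\varphi_{\varphi_y(w)}=\id$ for every $y\in X$. Thus the set $F=\{z\in X:\varphi_z=\id\}$ contains $w$ and contains $\varphi_y(w)$ for all $y\in X$. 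The second step is the observation that the very same computation, performed with an arbitrary $z_0\in F$ in place of $w$, yields $\varphi_y(z_0)\in F$ for all $y\in X$; hence $F$ is a non-empty subset of $X$ stable under every generator $\varphi_y$ of $\mathcal{G}(X)$, and since $X$ is finite $F$ is a union of $\mathcal{G}(X)$-orbits. If $F\neq X$ this already shows $X$ is decomposable. If $F=X$, then every $\varphi_y$ is the identity, so $\mathcal{G}(X)$ is trivial, and since $|X|>1$ it is still not transitive, so again $X$ is decomposable.

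I do not expect a genuine obstacle here: the one point requiring a little care is the degenerate case $F=X$ (the totally trivial cycle set), which must be disposed of separately using the hypothesis $|X|>1$. The conceptual step, and the only thing that needs to be gotten right, is the verification that $F$ is $\mathcal{G}(X)$-invariant, i.e.\ that plugging a $\varphi$-trivial element into the first slot of the cycle set axiom forces the product of any $\varphi_y$ with such an element to be $\varphi$-trivial again. For comparison, one could instead note that $\mathcal{G}(X)$ cannot be primitive (a cyclic solution has $\varphi_z\colon t\mapsto t-1\neq\id$, so primitivity would contradict Theorem~\ref{thm:CJO}) and then exclude non-trivial block systems as in the proof of Theorem~\ref{thm:decomposable}; but the direct argument above is shorter and self-contained.
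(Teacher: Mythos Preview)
Your argument is correct and coincides with the paper's own proof: both define the set $F=\{z\in X:\varphi_z=\id\}$, use the cycle set axiom with a $\varphi$-trivial element in one slot to see that $F$ is $\mathcal{G}(X)$-invariant, and then split into the cases $F=X$ and $F\neq X$. The only cosmetic difference is notation (the paper writes $y\cdot x$ where you write $\varphi_y(w)$) and that you phrase the identity as an equality of maps before substituting.
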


\begin{proof}
    Let $x\in X$ be such that $\varphi_x=\id$. Then
    \[
    y\cdot z=(x\cdot y)\cdot (x\cdot z)=(y\cdot x)\cdot(y\cdot z)
    \]
    holds for all $y,z\in X$. This implies that 
    $\varphi_{y\cdot x}=\id$ for all $y\in X$ and hence 
    $\varphi_{g\cdot x}=\id$ for all $g\in \mathcal{G}(X,r)$. 
    Let $Y=\{x\in X:\varphi_x=\id\}$. If $Y=X$, then the cycle set
    is trivial and hence it is decomposable. If not, then the group $\mathcal{G}(X)$ acts on $Y$. Thus 
    $Y$ is a non-trivial $\mathcal{G}(X)$-orbit and hence $X$ is decomposable.
\end{proof}

Let us show some direct consequences of the theorem.
A cycle set $X$ is said to be \emph{retractable} if the equivalence relation on $X$ given by
\[
x\sim y\Longleftrightarrow\varphi_x=\varphi_y
\]
is non-trivial. This notion corresponds to that of \emph{retractable solutions}. 

\begin{cor}
\label{cor:regular}
Let $X$ be a finite cycle set such that $|X|>1$. If $\mathcal{G}(X)$ acts regularly on $X$, then $X$ is retractable.
\end{cor}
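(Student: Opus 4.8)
The plan is to prove the statement by contradiction, using only two consequences of a regular action of $\mathcal{G}(X)$ on $X$: that it is transitive, and that it forces $|\mathcal{G}(X)|=|X|$.

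First I would record these two facts. Since the action is regular, all point stabilizers are trivial, so the orbit--stabilizer theorem gives $|\mathcal{G}(X)|=|X|\cdot 1=|X|$ (this uses that $X$ is finite). Moreover a regular action is transitive, so $\mathcal{G}(X)$ acts transitively on $X$; equivalently, $X$ is indecomposable.

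Next I would assume, towards a contradiction, that $X$ is \emph{not} retractable. By the definition of the equivalence relation $x\sim y\Leftrightarrow\varphi_x=\varphi_y$, non-retractability says precisely that the map $f\colon X\to\mathcal{G}(X)$, $x\mapsto\varphi_x$, is injective. Since $|\mathcal{G}(X)|=|X|$ and everything is finite, $f$ must then be a bijection; in particular the identity of $\mathcal{G}(X)$ is of the form $\varphi_{x_0}$ for some $x_0\in X$, i.e. $\varphi_{x_0}=\id$. As $|X|>1$, Theorem~\ref{thm:no_id} then implies that $X$ is decomposable, i.e. $\mathcal{G}(X)$ does not act transitively on $X$ --- contradicting the transitivity of the regular action. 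Hence $X$ is retractable.

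I do not expect a genuine obstacle in this argument; the only points requiring care are definitional: one must use ``regular'' only through transitivity together with the cardinality count $|\mathcal{G}(X)|=|X|$, and one must remember that for cycle sets ``$X$ is decomposable'' means exactly that $\mathcal{G}(X)$ fails to act transitively on $X$, so that Theorem~\ref{thm:no_id} applies verbatim in the cycle-set language used here.
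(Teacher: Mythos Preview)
Your proof is correct and follows essentially the same approach as the paper: both use that regularity gives transitivity (hence indecomposability) together with $|\mathcal{G}(X)|=|X|$, and then combine this cardinality equality with Theorem~\ref{thm:no_id}. The only difference is cosmetic: the paper argues directly (indecomposability forces $\varphi_x\neq\id$ for all $x$, so the map $x\mapsto\varphi_x$ lands in a set of size $|X|-1$ and must fail to be injective), whereas you phrase the same pigeonhole as a contradiction.
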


\begin{proof}
If $\mathcal{G}(X)$ acts regularly on $X$, then $|\mathcal{G}(X)|=|X|$. Since $X$ is indecomposable, 
$\varphi_x\ne\id$ for all $x\in X$ by Theorem~\ref{thm:no_id}. 
Then some $\varphi_x$ must appear at least twice as a defining permutation of $X$. 
\end{proof}

The following consequence goes back to Rump, see~\cite[Proposition 1]{MR2132760}.

\begin{cor}
If $X$ is an indecomposable finite cycle set with more than one element such that $\mathcal{G}(X)$ is abelian, then 
$X$ is retractable.
\end{cor}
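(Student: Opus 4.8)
The plan is to reduce this statement to Theorem~\ref{thm:no_id} by showing that when $\mathcal{G}(X)$ is abelian, the group acts regularly on $X$, so Corollary~\ref{cor:regular} applies directly. First I would recall that for an indecomposable cycle set the group $\mathcal{G}(X)$ acts transitively on $X$, so $|X|$ equals the index $[\mathcal{G}(X):\mathcal{G}(X)_x]$ of a point stabilizer. The key observation is that when $\mathcal{G}(X)$ is abelian, a transitive abelian permutation group is automatically regular: if $g$ fixes $x$ and $h$ is arbitrary, write an arbitrary point as $h(x)$ and compute $g(h(x)) = h(g(x)) = h(x)$, so $g$ fixes every point and hence $g = \id$. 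Thus the point stabilizers are trivial and $\mathcal{G}(X)$ acts regularly.

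Once regularity is established, the statement follows immediately from Corollary~\ref{cor:regular}: a finite indecomposable cycle set with $|X|>1$ on which $\mathcal{G}(X)$ acts regularly is retractable. So the proof is essentially a two-line argument: transitive plus abelian implies regular, then invoke the corollary. Alternatively, one could unwind this and appeal to Theorem~\ref{thm:no_id} directly --- since $|\mathcal{G}(X)| = |X|$ but each $\varphi_x$ is a nonidentity element of $\mathcal{G}(X)$ (as $\varphi_x = \id$ would force decomposability), the pigeonhole principle forces $\varphi_x = \varphi_y$ for some $x \neq y$ among the $|X|$ generators living in a group of the same order; but the cleaner route is through the corollary.

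I do not anticipate a serious obstacle here; the entire content is the elementary group-theoretic fact that transitive abelian groups are regular, combined with the already-proven Corollary~\ref{cor:regular}. The only thing to be careful about is ensuring that the hypothesis ``indecomposable'' is genuinely being used --- it is, both to guarantee transitivity (needed for the regularity argument) and to guarantee that no $\varphi_x$ is the identity (needed so that retractability is a nontrivial conclusion rather than a vacuous one). Since the corollary packages all of this, the proof should be just a sentence or two.

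\begin{proof}
Since $X$ is indecomposable, $\mathcal{G}(X)$ acts transitively on $X$. A transitive abelian permutation group acts regularly: if $g\in\mathcal{G}(X)$ fixes a point $x$ and $h\in\mathcal{G}(X)$ is arbitrary, then $g(h(x))=h(g(x))=h(x)$, so $g$ fixes every point of $X$ and hence $g=\id$. Thus $\mathcal{G}(X)$ acts regularly on $X$, and $X$ is retractable by Corollary~\ref{cor:regular}.
\end{proof}
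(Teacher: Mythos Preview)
Your proposal is correct and matches the paper's proof exactly: the paper simply states that the assumptions imply $\mathcal{G}(X)$ acts regularly on $X$ and then invokes Corollary~\ref{cor:regular}. You have merely made explicit the standard fact that a transitive abelian permutation group is regular, which the paper leaves to the reader.
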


\begin{proof}
The assumptions imply that $\mathcal{G}(X)$ acts regularly on $X$. Thus the claim follows from Corollary~\ref{cor:regular}.
\end{proof}

Now we present several results similar to that
of Rump about decomposability of square-free solutions. 
First we consider the case where the diagonal 
of the solution is a cycle of maximal length.

\begin{thm}
\label{thm:n}
	Let $X$ be a cycle set of size $n$. If $T$ is a cycle of length $n$, then $X$ is indecomposable.
\end{thm}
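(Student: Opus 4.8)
The plan is to argue by contradiction, assuming $X$ is decomposable, and to exploit the compatibility relation $\tau_x^{-1}\circ T = T\circ\sigma_x$ recalled from \cite{MR1722951}. This identity implies that $T$ normalizes the group generated by the $\sigma_x$ (equivalently, the $\tau_x$), so $T$ acts on the set of $\mathcal{G}(X)$-orbits. If $X = X_1 \sqcup \cdots \sqcup X_k$ is the orbit decomposition with $k \geq 2$, then $T$ permutes the blocks $X_i$. Since $T$ is an $n$-cycle, its only invariant subsets are $\emptyset$ and $X$; hence $T$ cannot stabilize any single orbit, and in fact $T$ must permute the orbits transitively, forcing all the $X_i$ to have the same cardinality $m = n/k$ and $T^k$ to stabilize each orbit.

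First I would make the block structure precise: relabel so that $T(X_i) = X_{i+1}$ (indices mod $k$), and note that $T^k|_{X_1}$ is then a single $m$-cycle on $X_1$ (because an $n$-cycle raised to the $k$-th power, when $k\mid n$, splits into $k$ cycles of length $n/k$, and our relabeling places one such cycle on $X_1$). The idea is then to show that $X_1$, equipped with a suitable induced operation, is itself a cycle set whose diagonal is this $m$-cycle, and to derive a contradiction from minimality or from a direct dimension/counting obstruction. Concretely, I would look at how $\sigma_x$ for $x \in X_1$ restricts: since $X_1$ is a $\mathcal{G}(X)$-orbit it is stabilized by every $\sigma_y$, so the restrictions $\sigma_y|_{X_1}$ give a transitive action of $\mathcal{G}(X)$ on $X_1$. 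The key structural input to extract is that the retraction-type quotient or the restricted data on $X_1$ still satisfies the cycle set axioms with diagonal $T^k|_{X_1}$, a cycle of maximal length on $X_1$; an induction on $n$ then reduces to the case $k = n$, i.e. $m = 1$.

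When $m = 1$ every orbit is a singleton, so $\mathcal{G}(X)$ acts trivially, meaning every $\varphi_x = \mathrm{id}$; but then the diagonal $T$ is the identity, contradicting that $T$ is an $n$-cycle with $n = |X| > 1$. This closes the induction. The main obstacle I anticipate is the middle step: verifying that the restriction to an orbit $X_1$ genuinely inherits a cycle set structure whose diagonal is exactly the power $T^k$ restricted to $X_1$. The cycle set axiom $(x\cdot y)\cdot(x\cdot z) = (y\cdot x)\cdot(y\cdot z)$ involves products $x\cdot y$ that need not lie in $X_1$ even when $x,y \in X_1$, so one cannot naively restrict $\cdot$; instead I expect to pass through the associated solution $(X,r)$ and the permutation-group description, using that $\mathcal{G}(X)$ preserves each orbit, together with the normalization identity $\tau_x^{-1}T = T\sigma_x$ to track how $T$ interacts with the orbit blocks. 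An alternative, possibly cleaner, route avoids the induction entirely: show directly that if $T$ is an $n$-cycle then $\mathcal{G}(X)$ must act transitively, by arguing that the subgroup of $\Sym_X$ generated by $\mathcal{G}(X)$ and $T$ is transitive (it contains the $n$-cycle $T$) and then showing that transitivity of $\langle \mathcal{G}(X), T\rangle$ forces transitivity of $\mathcal{G}(X)$ itself, again via $\tau_x^{-1}T = T\sigma_x$ which lets one rewrite any product involving $T$'s in terms of the $\sigma_x$'s up to a controlled power of $T$. I would try this second approach first and fall back on the inductive one if the combinatorics of conjugating $T$ past the generators proves unwieldy.
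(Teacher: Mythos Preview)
Your proposal takes a substantially more complicated route than necessary, and its main line of argument is built on a scenario that cannot occur. You correctly deduce, via the identity $\tau_x^{-1}T = T\sigma_x$, that $T$ permutes the $\mathcal{G}(X)$-orbits. But you then assert that, if there are $k\geq 2$ orbits, $T$ must cycle them transitively, $X_1\to X_2\to\cdots\to X_k\to X_1$. This is precisely what does \emph{not} happen. By the very definition of the diagonal one has $T^{-1}(x)=\sigma_x^{-1}(x)$, so $T^{-1}(x)$ lies in the $\mathcal{G}(X)$-orbit of $x$ for every $x$; hence $T$ \emph{stabilizes} each orbit setwise rather than moving it to a different one. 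Combined with your own observation that an $n$-cycle has no proper nonempty invariant subset, this already finishes the proof: every orbit, being $T$-invariant, must equal all of $X$.

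This is exactly the paper's argument, phrased there as a short induction on $m$ showing that $T^m(x)$ always lies in the $\mathcal{G}(X)$-orbit of $x$. Your elaborate plan---relabeling blocks so that $T(X_i)=X_{i+1}$, restricting the cycle set structure to $X_1$, identifying the induced diagonal with $T^k|_{X_1}$, and inducting on $n$---is built on the vacuous hypothesis $T(X_i)=X_{i+1}$ with $i+1\ne i$; the ``main obstacle'' you anticipate (whether the cycle set structure restricts) never needs to be faced. Your alternative route at the end, rewriting words in $\langle \mathcal{G}(X),T\rangle$ to push transitivity down to $\mathcal{G}(X)$, is closer in spirit but still misses the one-line point: $T$ already sends each element into its own $\mathcal{G}(X)$-orbit, because that is how $T$ is defined.
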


\begin{proof}
	Let $x,y\in X$. It is enough to prove that if $m\in\N$ and $T^m(x)=y$, then
	$x$ and $y$ belong to the same $\mathcal{G}(X)$-orbit.  This is clearly
	true if $m=1$. Now assume that the result holds for some $m\geq1$, then
	\[
	y=T^{m+1}(x)=T(T^m(x))
	\]
	implies that $T^m(x)$ and $y$ belong to the same
	$\mathcal{G}(X)$-orbit. Then the inductive hypothesis implies the claim. 
\end{proof}

In Theorem~\ref{thm:n}, 
it is important that the diagonal is a cycle of maximal length. A diagonal
that moves all points does not necessarily imply indecomposability. 

\begin{exa}
\label{exa:581}
	Let $X=\{1,\dots,6\}$ and $r(x,y)=(\sigma_x(y),\tau_y(x))$ be given by
	\begin{align*}
	    &\sigma_1=(123), && \sigma_2=(123)(456), &&\sigma_3=(123)(465), &&\sigma_4=\sigma_5=\sigma_6=(456),\\
	    &\tau_1=(132), && \tau_2=(132)(465), &&\tau_3= (132)(456),&& \tau_4=\tau_5=\tau_6=(465).
	\end{align*}
	Then $(X,r)$ is decomposable and $T=(123)(456)$ moves all
	points of $X$.  
\end{exa}

We shall need the following lemmas.

\begin{lem}
\label{lem:T}
Let $(X,r)$ be a solution and $x,y\in X$. Then $T(x)=y$ if and only if $r(y,x)=(y,x)$. 
\end{lem}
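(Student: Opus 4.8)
The plan is to compute $r(y,x)$ directly from the convention $r(a,b)=(\sigma_a(b),\tau_b(a))$ and then match the two coordinates against $(y,x)$, using only the definition of the diagonal and the two identities for non-degenerate involutive solutions recorded in Section~\ref{preliminaries}.

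First I would write $r(y,x)=(\sigma_y(x),\tau_x(y))$, so that $r(y,x)=(y,x)$ is equivalent to the conjunction of the two equations $\sigma_y(x)=y$ and $\tau_x(y)=x$. On the other hand, since $T(x)=\tau_x^{-1}(x)$, applying $\tau_x$ shows that $T(x)=y$ is equivalent to the single equation $\tau_x(y)=x$. This already gives the backward implication for free: if $r(y,x)=(y,x)$, then in particular $\tau_x(y)=x$, hence $y=\tau_x^{-1}(x)=T(x)$.

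For the forward implication the key point is that the first-coordinate equation $\sigma_y(x)=y$ comes \emph{automatically} from $\tau_x(y)=x$. Here I would invoke the identity $\sigma_a(b)=\tau_{\tau_b(a)}^{-1}(b)$, valid for non-degenerate involutive solutions (the second displayed formula in Section~\ref{preliminaries}), applied with $a=y$ and $b=x$: this gives $\sigma_y(x)=\tau_{\tau_x(y)}^{-1}(x)$. Substituting $\tau_x(y)=x$ into the subscript yields $\sigma_y(x)=\tau_x^{-1}(x)=T(x)=y$. Combined with $\tau_x(y)=x$, this gives $r(y,x)=(\sigma_y(x),\tau_x(y))=(y,x)$, completing the argument.

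There is essentially no serious obstacle here; the statement is a short unwinding of definitions. The only point requiring care is the bookkeeping when applying the symmetric identity $\sigma_a(b)=\tau_{\tau_b(a)}^{-1}(b)$ in the ``wrong'' order, i.e. making sure the subscripts and arguments are swapped consistently so that the substitution $\tau_x(y)=x$ lands in the subscript of $\tau$ rather than in its argument.
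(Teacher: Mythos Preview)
Your proposal is correct and essentially matches the paper's proof. The only cosmetic difference is that the paper computes $r(T(x),x)=(\sigma_{T(x)}(x),x)$ and then invokes involutivity directly to force $\sigma_{T(x)}(x)=T(x)$, whereas you use the pre-packaged consequence $\sigma_a(b)=\tau_{\tau_b(a)}^{-1}(b)$ of involutivity from Section~\ref{preliminaries}; the underlying mechanism is identical.
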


\begin{proof}
We first compute 
\[
r(T(x),x)=(\sigma_{T(x)}(x),\tau_xT(x))=(\sigma_{T(x)}(x),x).
\]
Since $(X,r)$ is involutive, it follows that $y=T(x)=\sigma_{T(x)}(x)$. 
Conversely, if $r(y,x)=(y,x)$, then $\tau_x(y)=x$ and hence $y=T(x)$. 
\end{proof}

The following result is well-known. 
It follows from a well-known general formula. 

\begin{lem}
\label{lem:Tx=x}
Let $(X,r)$ be a solution and $x\in X$ be such that $T(x)=x$. 
Then $T(\tau_y(x))=\tau_{\sigma_x(y)}(x)$ 
for all $y\in X$. 
\end{lem}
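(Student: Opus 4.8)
The plan is to derive the formula from one of the standard structure relations satisfied by the maps $\sigma$ and $\tau$ of any set-theoretic solution. Expanding the Yang--Baxter equation on a triple $(x,y,z)$ and comparing the third coordinates of both sides yields
\[
\tau_a\circ\tau_b=\tau_{\tau_a(b)}\circ\tau_{\sigma_b(a)}\qquad\text{for all }a,b\in X.
\]
This is the ``well-known general formula'' to be used; if one prefers, it can simply be quoted, as it is standard. Everything else is a one-line specialization, so I would rather re-derive this relation from the Yang--Baxter equation than risk a misprint in the indices.

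First I would translate the hypothesis. By the very definition of the diagonal, $T(x)=\tau_x^{-1}(x)$, so $T(x)=x$ is equivalent to $\tau_x(x)=x$ (equivalently, by Lemma~\ref{lem:T}, to $r(x,x)=(x,x)$). Next I would specialize the displayed relation by taking $a=y$ and $b=x$, obtaining $\tau_y\circ\tau_x=\tau_{\tau_y(x)}\circ\tau_{\sigma_x(y)}$, and evaluate both sides at the point $x$. Using $\tau_x(x)=x$ on the left gives
\[
\tau_y(x)=\tau_{\tau_y(x)}\bigl(\tau_{\sigma_x(y)}(x)\bigr).
\]
Applying $\tau_{\tau_y(x)}^{-1}$ to both sides and recalling that $\tau_a^{-1}(a)=T(a)$ for every $a\in X$, here with $a=\tau_y(x)$, we conclude $T(\tau_y(x))=\tau_{\sigma_x(y)}(x)$, as desired.

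There is essentially no obstacle: the only points requiring a little care are getting the indices right in the structure relation and observing that $\tau_{\tau_y(x)}^{-1}$ applied to $\tau_y(x)$ is by definition $T(\tau_y(x))$. Non-degeneracy is used only to invert $\tau_{\tau_y(x)}$; involutivity plays no role in this particular statement.
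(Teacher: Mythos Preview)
Your proof is correct and follows essentially the same route as the paper: both extract the desired identity from the Yang--Baxter equation, the paper by applying it directly to the triple $(x,x,y)$ and then invoking Lemma~\ref{lem:T}, you by first isolating the general third-coordinate relation $\tau_a\tau_b=\tau_{\tau_a(b)}\tau_{\sigma_b(a)}$ and then specializing. The only cosmetic difference is that you use the definition $T(a)=\tau_a^{-1}(a)$ directly rather than passing through Lemma~\ref{lem:T}.
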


\begin{proof}
The Yang--Baxter equation applied to the tuple $(x,x,y)$ yields
\begin{align*}
    r_2(\sigma_x\sigma_x(y),\tau_{\sigma_x(y)}(x),\tau_y(x))=(\sigma_x\sigma_x(y),\tau_{\sigma_x(y)}(x),\tau_y(x)).
\end{align*}
Thus $r(\tau_{\sigma_x(y)}(x),\tau_y(x))=(\tau_{\sigma_x(y)}(x),\tau_y(x))$. 
Now Lemma~\ref{lem:T} implies that 
$T(\tau_y(x))=\tau_{\sigma_x(y)}(x)$. 
\end{proof}

Applying the lemma to square-free solutions we get that  
square-free cycle sets are balanced. This result was noted by Rump, 
see Definition 2 and the following 
paragraph of~\cite{MR2132760}. We provide
a proof for completeness. 

\begin{pro}
If $(X,r)$ is a square-free solution, then $\tau_x=\sigma_x^{-1}$ for all $x\in X$.
\end{pro}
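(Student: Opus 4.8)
The plan is to run everything through Lemma~\ref{lem:Tx=x}. Since $(X,r)$ is square-free, $r(x,x)=(x,x)$ forces $\tau_x(x)=x$, and therefore $T(x)=\tau_x^{-1}(x)=x$ for every $x\in X$; in other words the diagonal $T$ is the identity map. In particular every point of $X$ is fixed by $T$, so Lemma~\ref{lem:Tx=x} applies at each $x\in X$ and gives $T(\tau_y(x))=\tau_{\sigma_x(y)}(x)$ for all $x,y\in X$. Using $T=\id$ on the left-hand side, this becomes the identity $\tau_y(x)=\tau_{\sigma_x(y)}(x)$, valid for all $x,y\in X$.

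The second step is to combine this with the basic relation for non-degenerate involutive solutions recalled in Section~\ref{preliminaries}, namely $\tau_y(x)=\sigma_{\sigma_x(y)}^{-1}(x)$. Comparing the two expressions for $\tau_y(x)$ yields $\tau_{\sigma_x(y)}(x)=\sigma_{\sigma_x(y)}^{-1}(x)$ for all $x,y\in X$. Now fix $x$ and let $y$ vary: since $\sigma_x$ is a bijection, $z=\sigma_x(y)$ ranges over all of $X$, so we conclude $\tau_z(x)=\sigma_z^{-1}(x)$ for all $x,z\in X$, that is $\tau_z=\sigma_z^{-1}$, as claimed.

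There is essentially no obstacle here; the only point that needs a little care is checking that square-freeness really does give $T(x)=x$ for \emph{every} $x$ (so that Lemma~\ref{lem:Tx=x} is available at every base point), and that the reparametrization $z=\sigma_x(y)$ sweeps out all of $X$. I note in passing that one can shortcut the argument entirely: the identity $\tau_x^{-1}\circ T=T\circ\sigma_x$ recorded in Section~\ref{preliminaries} collapses, once $T=\id$, to $\tau_x^{-1}=\sigma_x$. I would nonetheless present the proof via Lemma~\ref{lem:Tx=x}, since it keeps this part of the paper self-contained with respect to the lemmas just established.
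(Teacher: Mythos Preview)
Your proof is correct and follows essentially the same route as the paper's own argument: both use Lemma~\ref{lem:Tx=x} (available since $T=\id$) together with the basic identity $\tau_y(x)=\sigma_{\sigma_x(y)}^{-1}(x)$, and then reparametrize via the bijection $\sigma_x$ to conclude. The shortcut you mention at the end, collapsing $\tau_x^{-1}\circ T=T\circ\sigma_x$ under $T=\id$, is a valid and even shorter alternative.
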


\begin{proof}
  Let $x,y\in X$. We know that all solutions satisfy $\tau_x(y)=\sigma_{\sigma_y(x)}^{-1}(y)$. 
  Since $(X,r)$ is square-free, $T(y)=y$ and $T(\tau_x(y))=\tau_x(y)$. Lemma~\ref{lem:Tx=x} 
  implies that $\tau_{\sigma_y(x)}(y)=\tau_x(y)=\sigma_{\sigma_y(x)}^{-1}(y)$, from where the claim follows.
\end{proof}

We consider the case where the diagonal fixes one point. 

\begin{thm}
    \label{thm:fix1}
	Let $(X,r)$ be a finite solution of size $n>1$. If the diagonal of $(X,r)$ 
	is a cycle of length $n-1$,
	then $X$ is decomposable.
\end{thm}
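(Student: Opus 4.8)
The plan is to argue by contradiction. Assume $X$ is indecomposable, so that $\mathcal{G}(X)$ acts transitively on $X$, and let $x_0$ be the unique point of $X$ fixed by the diagonal $T$; thus $T$ restricts to an $(n-1)$-cycle on $X\setminus\{x_0\}$. (If $n=2$ the hypothesis only says that $T=\id$, in which case $(X,r)$ is square-free and the conclusion is Rump's Theorem~\ref{thm:Rump}; so assume $n\geq3$.) By Lemma~\ref{lem:T}, $r(x_0,x_0)=(x_0,x_0)$, and in particular $\sigma_{x_0}(x_0)=x_0$ and $\tau_{x_0}(x_0)=x_0$.

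The key object is the map $f\colon X\to X$, $f(y)=\tau_y(x_0)$, together with its image $Z=f(X)$. First I would record that $x_0=f(x_0)\in Z$. Next, applying Lemma~\ref{lem:Tx=x} at the fixed point $x_0$ yields
\[
T(f(y))=T(\tau_y(x_0))=\tau_{\sigma_{x_0}(y)}(x_0)=f(\sigma_{x_0}(y))\qquad\text{for all }y\in X,
\]
that is, $T\circ f=f\circ\sigma_{x_0}$. Since $\sigma_{x_0}$ is a bijection of $X$, this gives $T(Z)=f(\sigma_{x_0}(X))=f(X)=Z$, so $Z$ is $T$-invariant. But the orbits of $T$ on $X$ are exactly $\{x_0\}$ and $X\setminus\{x_0\}$, so the only $T$-invariant subsets of $X$ containing $x_0$ are $\{x_0\}$ and $X$, and the argument splits accordingly.

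If $Z=\{x_0\}$, then $\tau_y(x_0)=x_0$ for every $y\in X$, so the subgroup $\langle\tau_y:y\in X\rangle$ of $\Sym_X$ fixes $x_0$ and therefore is not transitive; since this subgroup is isomorphic, as a permutation group on $X$, to $\mathcal{G}(X)$, the group $\mathcal{G}(X)$ is not transitive either, contradicting indecomposability. If $Z=X$, then $f$ is surjective, hence bijective, and the relation $T\circ f=f\circ\sigma_{x_0}$ becomes $T=f\circ\sigma_{x_0}\circ f^{-1}$; thus $\sigma_{x_0}\in\mathcal{G}(X)$ is a cycle of length $n-1$. A transitive subgroup of $\Sym_X$ that contains an $(n-1)$-cycle is $2$-transitive — the stabilizer of the point fixed by the cycle already acts transitively on the remaining $n-1$ points — hence primitive, so $(X,r)$ is a primitive solution. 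Theorem~\ref{thm:CJO} then forces $(X,r)$ to be a cyclic solution; but the diagonal of the cyclic solution $(\Z/n,r)$, $r(x,y)=(y-1,x+1)$, is $x\mapsto x-1$, a cycle of length $n$, which contradicts the hypothesis that $T$ has a fixed point.

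The verifications that $r(x_0,x_0)=(x_0,x_0)$, that $T\circ f=f\circ\sigma_{x_0}$, and that the diagonal of a cyclic solution is a full cycle are all routine. The step that drives the proof is the observation that $Z$ is $T$-invariant, producing the clean dichotomy $Z=\{x_0\}$ or $Z=X$; I expect this to be the main point, since it is precisely what lets Theorem~\ref{thm:CJO} enter in the second case. An essentially equivalent alternative would be to first invoke Theorem~\ref{thm:CJO} (together with the cyclic-solution computation) to conclude that $\mathcal{G}(X)$ cannot be primitive, and then examine a minimal block system through $x_0$; but routing the argument through $f$ is shorter and sidesteps any discussion of blocks.
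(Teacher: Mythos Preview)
Your proof is correct and follows essentially the same route as the paper's: both introduce $f(y)=\tau_y(x_0)$, use Lemma~\ref{lem:Tx=x} to obtain $T\circ f=f\circ\sigma_{x_0}$, deduce that $\sigma_{x_0}$ is an $(n-1)$-cycle fixing $x_0$, conclude primitivity, and finish with Theorem~\ref{thm:CJO}. Your packaging via the $T$-invariance of $Z=f(X)$ and the $2$-transitivity observation is slightly slicker than the paper's explicit block argument, but the substance is the same.
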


\begin{proof}
  Let $x_0\in X$ be such that $T(x_0)=x_0$ and $Y=X\setminus\{x_0\}$. Assume that 
  $(X,r)$ is indecomposable. 
  Let $f\colon Y\to X$ be given by $y\mapsto \tau_y(x_0)$. 
  Since $(X,r)$ is indecomposable, there exists $y\in Y$ such that $\tau_y$ moves 
  $x_0$, as $x_0=T(x_0)=\tau_{x_0}(x_0)$ and $\mathcal{G}(X,r)=\langle\tau_y:y\in Y\rangle$. 
  So let $y\in Y$ be
  such that $f(y)\in Y$. 
  Since $T$ is a cycle of length $n-1$, 
  the permutation $T$ moves each element of $Y$. 
  This implies that
  \[
  Y=\{f(y),T(f(y)),\dots,T^{n-2}(f(y))\}.
  \]
  Since  
  \[
  T(f(y))=T(\tau_y(x_0))=\tau_{\sigma_{x_0}(y)}(x_0)=f(\sigma_{x_0}(y))
  \]
  by Lemma~\ref{lem:T}, it follows that
  \[
  Y=\{f(y),f(\sigma_{x_0}(y)),\dots,f(\sigma_{x_0}^{n-2}(y))\}.
  \]
  Then $\sigma_{x_0}$ permutes every element of $Y$ cyclically. 
  
  We claim that $\mathcal{G}(X,r)$ acts primitively on $X$. 
  If not, there exists a non-trivial block $B$. 
  Without loss of generality we may assume that $x_0\in B$. 
  Then there exists $y\in Y$ such that $y\in B$. Since $\sigma_{x_0}(x_0)=x_0$, it follows that
  $\sigma_{x_0}(B)=B$. In particular, $\sigma_{x_0}(y)\in B$ and hence $Y\subseteq B$ because
  $\sigma_{x_0}$ is a cyclic permutation on $Y$, a contradiction. 
  
  Since $\mathcal{G}(X,r)$ acts primitively on $X$, Theorem~\ref{thm:CJO} implies that
  $(X,r)$ is a cyclic solution. This is a contradiction, as the diagonal 
  of a cyclic solution has no fixed points.
\end{proof}

By inspection, solutions of size $\leq 10$ 
such that the diagonal has only one fixed point are decomposable. This is not necessarily 
true if the diagonal fixes more than one point. 

\begin{exa}
    Let $X=\{1,2,\dots,8\}$ and $r(x,y)=(\sigma_x(y),\tau_y(x))$, where
    \begin{align*}
    \sigma_1&=(45), &\sigma_2&=(36), 
    &\sigma_3&=(27), &\sigma_4&=(18),\\
    \sigma_5&=(13428657), &\sigma_6&=(17568243), 
    &\sigma_7&=(12468753), &\sigma_8&=(13578642),\\
    \tau_1&=(46), &\tau_2 &= (35),
    &\tau_3&=(28), &\tau_4 &= (17),\\
    \tau_5&=(18657243), &\tau_6&=(13427568), 
    &\tau_7&=(13687542), &\tau_8&=(12457863).
    \end{align*}
    Then $T=(57)(68)$ has four fixed points and $(X,r)$ is indecomposable. 
\end{exa}

\begin{exa}
    Let $X=\{1,2,\dots,9\}$ and $r(x,y)=(\sigma_x(y),\tau_y(x))$, where
    \begin{align*}
    \sigma_1 &= (167925483), & \sigma_2 &=(125983467), &\sigma_3&=(165923487),\\
    \sigma_4 &= (158936472), & \sigma_5 &=(136972458), & \sigma_6 &=(156932478),\\ 
    \sigma_7 &= (149)(268), & \sigma_8 &=(149)(357), & \sigma_9 &=(268)(357),\\
    \tau_1 &= (185639742), & \tau_2 &=(189632745), &\tau_3&=(139642785),\\
    \tau_4 &= (198623754), & \tau_5 &=(128653794), & \tau_6 &=(194628753),\\ 
    \tau_7 &= (295)(384), & \tau_8 &=(176)(295), & \tau_9 &=(176)(384).
    \end{align*}
    Then $T=(123)(456)$ has three fixed points and $(X,r)$ is indecomposable. 
\end{exa}


Now we consider the case where the diagonal fixes two or three points. 

\begin{thm}
\label{thm:n-2_n-3}
Let $(X,r)$ be a finite solution of size $n$. 
\begin{enumerate}
    \item If $T$ is an $(n-2)$-cycle and $n$ is odd, then $X$ is decomposable. 
    \item If $T$ is an $(n-3)$-cycle and $\gcd(3,n)=1$, then $X$ is decomposable.
\end{enumerate}
\end{thm}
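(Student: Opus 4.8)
The plan is to mimic the proof of Theorem~\ref{thm:fix1}, working with the $\tau$'s and the map $f$ that transports the fixed-point situation into an action of $\sigma_{x_0}$, and then to invoke Theorem~\ref{thm:CJO} once we have argued primitivity. Assume $(X,r)$ is indecomposable; we will derive a contradiction. Write $F=\Fix(T)=\{x_0,x_1\}$ in case (1) and $F=\{x_0,x_1,x_2\}$ in case (2), and set $Y=X\setminus F$, so $T$ acts as a single $(n-2)$- (resp. $(n-3)$-) cycle on $Y$.

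First I would collect the two structural facts from the earlier lemmas. Lemma~\ref{lem:Tx=x} applied at each fixed point $x_i$ gives $T(\tau_y(x_i))=\tau_{\sigma_{x_i}(y)}(x_i)$ for all $y\in X$; equivalently, writing $f_i\colon X\to X$, $f_i(y)=\tau_y(x_i)$, we get $T\circ f_i=f_i\circ\sigma_{x_i}$. Since $(X,r)$ is indecomposable and $\mathcal{G}(X,r)=\langle\tau_y:y\in X\rangle$, not every $\tau_y$ can fix all of $F$ (each $x_i$ is $\tau_{x_i}$-fixed by Lemma~\ref{lem:T}, but the group is transitive), so some $f_i$ sends some $y\in Y$ into $Y$. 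For that $i$, the intertwining relation $T\circ f_i=f_i\circ\sigma_{x_i}$ together with the fact that $T$ permutes $Y$ in a single cycle forces, exactly as in Theorem~\ref{thm:fix1}, that $f_i$ maps a $\sigma_{x_i}$-orbit bijectively onto $Y$; hence $\sigma_{x_i}$ has a single cycle of length $|Y|$ on some invariant subset, and in particular $\sigma_{x_i}$ acts transitively on a set of size $n-2$ (resp. $n-3$) that, after possibly relabelling, we may take to be $Y$ itself — the remaining points of $X$ under $\sigma_{x_i}$ are accounted for by $\sigma_{x_i}(x_i)=x_i$ (Lemma~\ref{lem:Tx=x} with $y=x_i$ gives $\tau_{x_i}(x_i)=x_i$, and one checks $\sigma_{x_i}(x_i)=x_i$) and by a single remaining point (case 1) or two remaining points (case 2).

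Next comes the primitivity argument. Suppose $B$ is a non-trivial block; by transitivity we may take $x_0\in B$. Since $\sigma_{x_0}$ fixes $x_0$ we get $\sigma_{x_0}(B)=B$, and since $\sigma_{x_0}$ is a single large cycle on $Y$, either $B\cap Y=\emptyset$ or $Y\subseteq B$. In case (1), $|X|=n$ is odd while $|B|$ divides $n$; a block with $Y\subseteq B$ has $|B|\in\{n-1,n\}$, forcing $|B|=n$ (as $n-1\nmid n$ for $n>2$), a contradiction, so $B\cap Y=\emptyset$, i.e. $B\subseteq F$, so $|B|\le 2$, hence $|B|=2$ and $|B|\mid n$ contradicts $n$ odd. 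In case (2), if $Y\subseteq B$ then $|B|\in\{n-3,n-2,n-1,n\}$; the nontrivial options dividing $n$ are excluded because $\gcd(3,n)=1$ kills $|B|=n-3$ and for $n>$ small the others do not divide $n$ — here one must be slightly careful and I expect this to be the main obstacle: ruling out $|B|=n-1$ and $|B|=n-2$ dividing $n$ for all relevant $n$, and separately handling the complementary case $B\subseteq F$ where $|B|\in\{2,3\}$ must divide $n$, so $|B|=3$ is excluded by $\gcd(3,n)=1$ and $|B|=2$ must be ruled out (this needs that if $n$ is even then the single $\sigma_{x_0}$-cycle on $Y$ has even length $n-3$, odd, forcing... — so one argues on parity and the coprimality of the fixed-point count $3$ with $n$, exactly in the spirit of Theorem~\ref{thm:decomposable}).

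Once primitivity is established, Theorem~\ref{thm:CJO} says $(X,r)$ is a cyclic solution of prime size; but the diagonal of a cyclic solution $(\Z/p,r)$, $r(x,y)=(y-1,x+1)$, has $T(x)=\tau_x^{-1}(x)=x$ shifted — one computes it is fixed-point-free, so it cannot be an $(n-2)$- or $(n-3)$-cycle with $2$ or $3$ fixed points when $n=p>3$. (The small cases $n\le 5$ or so are checked directly, or absorbed because an $(n-2)$- or $(n-3)$-cycle is then trivial or forces $n$ even, against the hypotheses.) This contradiction completes both cases. The cleanest way to organise the write-up is to prove once, as a sub-lemma, the statement ``if $T$ is an $(n-k)$-cycle with $\gcd(k,n)=1$ and the complement of $\Fix(T)$ cannot be covered by a proper block, then $(X,r)$ is primitive,'' and then feed in $k=2$ with $n$ odd and $k=3$ with $\gcd(3,n)=1$.
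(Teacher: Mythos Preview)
Your approach is the paper's: use Lemma~\ref{lem:Tx=x} at a $T$-fixed point $z$ to deduce that $\sigma_z$ contains a cycle of length $m=|Y|$, then run a block argument against Theorem~\ref{thm:CJO}. The paper streamlines this by first isolating Theorem~\ref{thm:decomposable} (if some $\varphi_x$ is a single cycle of length $\geq|X|/2$ coprime to $|X|$, then $X$ is decomposable), which disposes of case~(1) and the sub-case of~(2) where $\sigma_z$ is a pure $(n-3)$-cycle in one line; only the sub-case where $\sigma_z$ is an $(n-3)$-cycle times a transposition needs an ad~hoc block count, and there the block through $z$ is forced to have size~$3$, contradicting $\gcd(3,n)=1$.

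Two things to clean up. First, the claim that the long $\sigma_{x_i}$-cycle can be ``relabelled to be $Y$ itself'' is false and unnecessary: the intertwining $T\circ f_i=f_i\circ\sigma_{x_i}$ only shows that $\sigma_{x_i}$ has \emph{some} cycle of length $m$, and your block argument uses only the cycle type of $\sigma_{x_i}$, not which points sit in that cycle. Second, your case~(2) analysis is genuinely incomplete at the point you flag: when $\sigma_{x_0}$ is a pure $(n-3)$-cycle with three fixed points and $n$ is even, you must rule out $|B|=2$. This does work (if $B=\{x_0,w_1\}$ then the block through the third fixed point $w_2$ pairs it with some $c$ in the long cycle, and applying $\sigma_{x_0}$ forces $\sigma_{x_0}(c)=c$), but it is exactly the kind of case analysis that Theorem~\ref{thm:decomposable} is designed to absorb: the block through $z$ cannot meet the long cycle without swallowing it (size $>n/2$), so it sits inside the $3$-element complement, and $\gcd(3,n)=1$ finishes. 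I would recommend proving that lemma once and citing it, as the paper does, rather than redoing the block-size casework inside the proof.
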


\begin{proof}
  Let $m$ be the length of $T$, 
  \begin{align*}
      Y=\{y\in X:T(y)\ne y\} &&\text{and} && Z=\{z\in X:T(z)=z\}. 
  \end{align*}
  If $X=Y\cup Z$ is a decomposition of $X$, there is nothing to prove. If not, 
  there exists $z\in Z$ and $x\in X$ such that $\tau_x(z)\in Y$. Since $T$ permutes the elements of $Y$ cyclically, 
  the elements $\tau_x(z),T(\tau_x(z)),\dots,T^{m-1}(\tau_x(z))$ are all distinct. Moreover, since $z$ is 
  a fixed point, 
  \[
  \{\tau_x(z),T(\tau_x(z)),\dots,T^{m-1}(\tau_x(z))\}=
  \{\tau_x(z),\tau_{\sigma_z(x)}(z),\dots,\tau_{\sigma_z^{m-1}(x)}(z)\}. 
  \]
  In particular, $x,\sigma_z(x),\dots,\sigma_z^{m-1}(x)$ are all distinct elements, so the decomposition of 
  $\sigma_z$ into disjoint cycles contains a cycle of length at least $m$. Since $T$ is a cycle of length $m$, 
  $T^{k+m}(\tau_x(z))=T^k(\tau_x(z))$, i.e. 
  \[
  \tau_{\sigma_z^{k+m}}(x)(z)=\tau_{\sigma_z^k(x)}(z).
  \]
  Moreover, $m$ is the minimal period which implies that the length of this cycle in $\sigma_z$ has to be divisible by $m$. 
  By the hypothesis on the $m$ we are taking this can only happen if the length of the cycle is exactly $m$.

  In the case where $m=n-2$, since $\sigma_z$ has at least one fixed point $z$, $\sigma_z$ has to be a cycle of length $m$.
  By Theorem \ref{thm:decomposable} $X$ must be decomposable. In the case that $m=n-3$ there are two possibilities, either 
  $\sigma_z$ is a cycle of length $n-3$, and we can conclude the same way as before, or it is a product of cycle of length 
  $n-3$ and a cycle of length 2. If it is a product of two cycles we take a non-trivial block containing $z$. Like 
  before this block cannot contain an element of the cycle of length $n-3$. If it does not but it contains an element 
  that belongs to the cycle of length 2 then it also has to contain the other element of the cycle. This means the block 
  has size 3 and this is not possible because $\gcd(3,n)=1$.
\end{proof}

The assumption on $n$ cannot be removed from the first claim of Theorem~\ref{thm:n-2_n-3}.

\begin{exa}
\label{exa:T=12}
Let $X=\{1,2,3,4\}$ and $r(x,y)=(\sigma_x(y),\tau_y(x))$, where
\begin{align*}
\sigma_1&=(34), &\sigma_2&=(1324), & \sigma_3&=(1423), & \sigma_4&=(12),\\
\tau_1&=(24),  &\tau_2&=(1432), & \tau_3&=(1234), & \tau_4&=(13).
\end{align*}
Then $T=(12)$ and $(X,r)$ is indecomposable. 
\end{exa}

Indecomposable solutions of small size are known up to size 11 and there
are only 172 such solutions, see
Table~\ref{tab:indecomposable}~\cite{AMV}. 

\begin{table}[h]
\caption{Indecomposable solutions of size $n\leq11$.}
\begin{tabular}{|r|cccccccccc|}
\hline
$n$ & 2 & 3 & 4 & 5 & 6 & 7 & 8 & 9 & 10 & 1\tabularnewline
\hline 
number & 1 & 1 & 5 & 1 & 10 & 1 & 100 & 16 & 36 & 1\tabularnewline
\hline
\end{tabular}
\label{tab:indecomposable}
\end{table}

\begin{question}
Is it true that 
\[
\lim_{n\to\infty}\frac{\text{\# decomposable solutions of size $n$}}{\text{\# solutions of size $n$}}=1?
\]
\end{question}

The number of known indecomposable solutions is quite small 
for making reasonable conjectures. However, the following questions seem to be
interesting.

\begin{question}
    Let $(X,r)$ be a finite solution. Is it true that if some $\sigma_x$ contains a cycle of length coprime with $|X|$ then $(X,r)$ is decomposable?
\end{question}


\begin{question}
Let $(X,r)$ be a finite solution. How many fixed points of the diagonal of 
$(X,r)$ guarantee the decomposability of $(X,r)$?
\end{question}

Computer calculations suggest that if the diagonal of a solution $(X,r)$ 
fixes more than $|X|/2$ points, then $(X,r)$ is decomposable. 
We do not know, for example, 
what happens if the diagonal is a transposition or a 3-cycle. 

\begin{question}
Are there indecomposable solutions, where the diagonal is a transposition, 
different from that of Example~\ref{exa:T=12}?
\end{question}

\begin{question}
Are there indecomposable solutions 
where the diagonal is a 3-cycle?
\end{question}

In Table~\ref{tab:T} we show
the cycle structure of the diagonal of small indecomposable solutions. 

\begin{table}[h]
\caption{Cycle structure of the diagonal of small indecomposable solutions.}
\begin{tabular}{|c|c|c|} 
\hline
 $n$ & number & cycle structure\\ 
 \hline
 4 & 3 & 4 \\ 
   & 2 & 2 \\ 
 \hline
 6 & 6 & 6\\
   & 3 & 3,3\\
   & 1 & 2,2,2\\
  \hline
 8 & 7 & 2,2\\
 & 25 & 2,2,2,2\\
 & 2 & 2,2,4\\
 & 1 & 2,6\\
 & 14 & 8\\
 & 51 & 4,4\\
 \hline
 9 & 1 & 3,3\\
 & 6 & 3,3,3\\
 & 9 & 9\\
 \hline
 10 & 1 & 2,2,2,2,2\\
 & 15 & 5,5\\
 & 20 & 10\\
 \hline
\end{tabular}
\label{tab:T}
\end{table}





\section*{Acknowledgments}

Vendramin acknowledges the support of
NYU-ECNU Institute of Mathematical Sciences at NYU Shanghai. 
This work is supported 
by PICT 2016-2481 and UBACyT 20020170100256BA. 

\bibliographystyle{abbrv}
\bibliography{refs}

\end{document}